\newtheorem{theorem}{Theorem}
\newtheorem{corollary}{Corollary}
\newtheorem{definition}{Definition}
\begin{document}
 
\title{{ Index Coding with Coded Side-Information} 
}

\author{
  \IEEEauthorblockN{Namyoon Lee, Alexandros G. Dimakis, and Robert W. Heath Jr. }
\thanks{N. Lee, A. G. Dimakis, and R. W. Heath Jr. are with the Wireless Networking and Communications Group, Department of Electrical and Computer Engineering, The University of Texas at
Austin, Austin, TX 78712, USA. (e-mail:\{namyoon.lee, rheath\}@utexas.edu and dimakis@austin.utexas.edu). This work was supported in part by Intel Labs.}  
}
 
%This research is partially supported by Intel Labs.
 
%% Create the title:
\maketitle

\begin{abstract}
This letter investigates a new class of index coding problems. One sender broadcasts packets to multiple users, each desiring a subset, by exploiting prior knowledge of linear combinations of packets. We refer to this class of problems as index coding with \textit{coded} side-information. Our aim is to characterize the minimum index code length that the sender needs to transmit to simultaneously satisfy all user requests. We show that the optimal binary vector index code length is equal to the minimum rank (minrank) of a matrix whose elements consist of the sets of desired packet indices and side-information encoding matrices. This is the natural extension of matrix minrank in the presence of coded side information. Using the derived expression, we propose a greedy randomized algorithm to minimize the rank of the derived matrix.
 
\end{abstract}

\begin{keywords} 
Index coding and coded side-information\end{keywords} 
 
\section{Introduction}
Index coding in \cite{Birk,Yossef,Alon} is a transmission technique for noiseless broadcasting channel consisting of a transmitter and a set of users. The transmitter wishes to deliver multiple packets to their respective users over a shared noiseless link. Each user has its own prior knowledge of a subset of the packets. The transmitter sends a signal per time slot and all the users receive it without noise. The goal is to design transmit codes to minimize the number of required transmissions so that all users decode the desired packets with their own side-information and the received signals from the transmitter. This class of problems has recently 
received attention because of its connections to network coding \cite{Effros} and topological interference management \cite{Jafar}. Designing an efficient index code is tightly related with the constructing codes for caching \cite{MN} and distributed storage systems \cite{Dimakis}.

There has been extensive work on characterizing the optimal index code length (the minimum number of transmissions) \cite{Birk,Yossef, Alon}. Approaches based on graph theory are popular because of the strong connection between the optimal index code length and graph-theoretical quantities \cite{Birk,
Alon}. For instance, when each user wants distinct packets, an index code design problem is equivalently represented in terms of a directed side-information graph $G$. It was shown in \cite{Yossef} that, for the given direct side-information graph $G$, the optimal index code length is lower and upper bounded by the maximum independent set number of the corresponding graph, $\alpha(G)$, and the chromatic number of its complement, $\chi(\bar{G})$. These approaches  \cite{Birk,Yossef, Alon} are useful in characterizing the bounds of the optimal index code length and in obtaining the optimal index code for a certain class of side-information graphs (e.g., vertex-coloring methods \cite{Dimakis2}).

Algebraic approaches are also effective methods to characterize the optimal index code length. One key result is that the optimal binary index code length equals to the minimum rank (minrank) of a matrix that fits the side-information graph $G$, i.e., ${\rm minrk}(G)$ \cite{Yossef}. This algebraic expression yields a new way of constructing index codes by solving a matrix completion problem over a finite field.

In this letter, we consider a generalization of index coding when the side information packets
can be themselves coded. Specifically, unlike the conventional assumption that each user independently knows a subset of other users' packets as side-information \cite{Birk,Yossef,Alon,Dimakis,Dimakis2}, we admit a coded structure in generating side-information so that each user is able to have any linear combinations of all packets as side-information. As is well-known, index coding is already tremendously challenging even the side information packets are not coded. However, the idea of allowing coded side information is useful (beyond mathematical interest as a natural generalization) since in many situations side information is created by overhearing previous transmissions which will be very frequently coded. This is especially relevant in caching scenarios where helpers try to assist in content dissemination \cite{MN}.

To explain the index coding problem with coded side-information, we provide a motivating example. As depicted in Fig. \ref{fig:1}, a transmitter desires to deliver $a$, $b$, and $c$ to user 1, 2, and 3, respectively. Let us first consider the uncoded side-information case  where each user separately knows the others' desired information bits. In such case, one XORed transmission $a+b+c$ suffices to make all three users decode the desired information bits, if user 1, 2, and 3 stored two bits $\{b,c\}$, $\{a,c\}$, and $\{a,b\}$, where $+$ represents an XOR operation or addition over the binary field $\mathbb{F}_2$. Next, let us consider a different scenario in which each user may only store one bit due to the lack of memory in the device. In this case, if user 1, 2, and 3 have coded side-information of $b+c$, $a+c$, and $a+b$, the same XORed transmission $a+b+c$ are enough to satisfy all three users. This example reveals the benefit of coding over side-information in reducing the size of cache while maintaining the transmission rate. 
For the case of uncoded side-information, each user requires cache memory with two bits to decode the desired information bit from the XORed transmission $a+b+c$ by the transmitter. Whereas, if each user strores the XORed bit instead of saperately storing them, cache memory with one bit is enough to extract the desired information bit.
\begin{figure}
\centering
\includegraphics[width=3.5in]{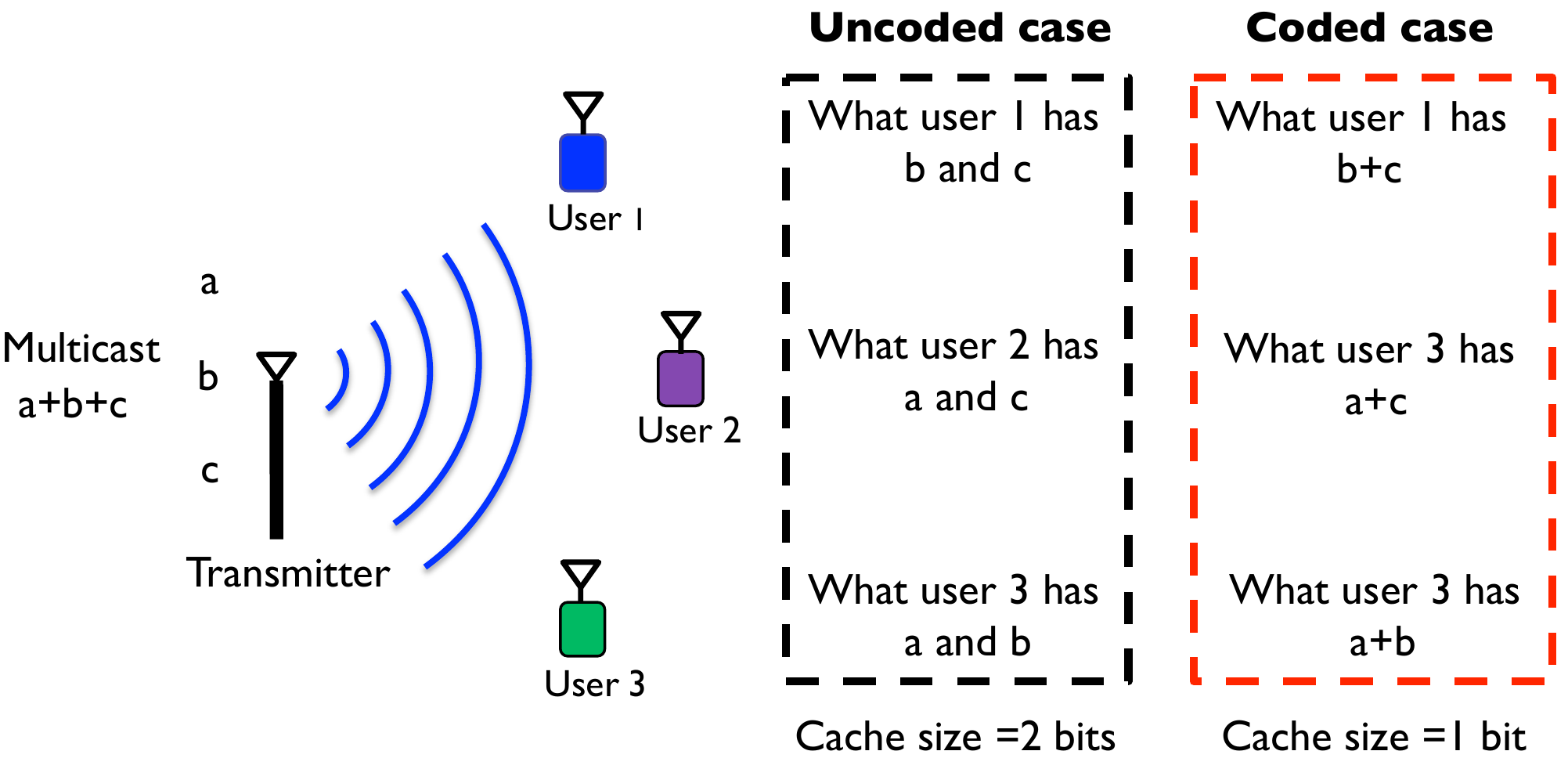}
\caption{A motivating example of the index coding problem with coded side-information. For the case of uncoded side-information, each user requires cache memory with two bits to decode the desired information bit from the XORed transmission $a+b+c$ by the transmitter. Whereas, if each user strores the XORed bit instead of saperately storing them, cache memory with one bit is enough to extract the desired information bit.} \label{fig:1}  
\end{figure}

Our main contribution is to characterize the optimal binary vector linear index code length in terms of the minrank of a matrix when the users have coded side-information. Our key finding is that the minrank expression is a function of 1) a set of the packet indices requested by the users and 2) a set of side-information encoding matrices. With the derived equivalence between the optimal index code length and the minrank expression, we propose a random greedy algorithm that minimizes the rank of the derived matrix. The index coding problem with coded side-information, in fact, was initially considered in \cite{Shum} where a linear index code with coded side-information can be found equivalently by solving a system of multi-variable polynomial equations, which is difficult to solve in general. We show how to design index codes by minimizing the rank of a derived matrix. This rank minimization formulation allows us to connect the index code design problem to a matrix completion problem over finite fields \cite{Draper}. 
 
% Although constructing the optimal binary index code length is non-deterministic polynomial-time hard using the derived minrank expression, a matrix completion problem over finite fields (see also \cite(Draper)

\section{Problem Formulation}
Consider one transmitter and $K$ users in a network. The transmitter has $N$ packets, each with $F$ bits, ${\bf x}_n \in \mathbb{F}_2^F$ for $n\in\mathcal{N}=\{1,2,\ldots,N\}$. We denote a sequence of all packets by ${\bf x}=[{\bf x}_1,{\bf x}_2,\ldots, {\bf x}_N]^T\in \mathbb{F}_2^{FN}$. User $k \in\{1,2,\ldots,K\}=\mathcal{K}$ requests a set of packets $\{{\bf x}_{i}\}$ for $ i\in\mathcal{T}_k\subset \mathcal{N}$ ($\mathcal{T}_k$ is a subset of $\mathcal{N}$). For example, if $\mathcal{T}_k=\{1,2\}$, then user $k$ desires to decode packets ${\bf x}_1$ and ${\bf x}_2$. Further, user $k\in\mathcal{K}$ has coded side-information ${\bf u}_k \in \mathbb{F}_2^{M_kF}$ with the size of $M_kF$ bits for $M_kF\in\mathbb{N}$. Assuming a linear encoding method, side-information given to user $k$, ${\bf u}_k$, is created by a side-information generating matrix ${\bf S}_k\in \mathbb{F}_2^{M_kF\times NF}$ as
\begin{align}
{\bf u}_k={\bf S}_k{\bf x}.
\end{align}
With knowledge of the set of encoding matrices $\{{\bf S}_1,\ldots,{\bf S}_K\}$, the transmitter sends different linear combinations of packets ${\bf x}$ over $LF$ time slots (channel uses) so that all users successfully decode the requested packets by exploiting their coded side-information ${\bf u}_k$. 

Under the restriction of linear coding, the transmitter uses an index coding matrix as an encoding function, i.e., ${\bf C}_{\rm IC}=[{\bf C}_1,{\bf C}_2,\ldots,{\bf C}_N]\in \mathbb{F}^{LF\times NF}$. Note that the $k$-th sub-matrix ${\bf C}_{k} \in\mathbb{F}_2^{LF\times F}$ is the precoding matrix carrying the $k$-th packet ${\bf x}_k$. When the transmitter sends $L$ packets with the index coding matrix ${\bf C}_{\rm IC}$ over a noiseless link, user $k$ obtains the information vector ${\bf y}\in\mathbb{F}_2^{LF} $ over $LF$ channel uses as
\begin{align}
{\bf y}={\bf C}_{\rm IC}{\bf x}.
\end{align}
Applying a linear decoder ${\bf D}_k^T \in \mathbb{F}_2^{|\mathcal{T}_k|F\times (L+M_k)F}$, user $k \in\{1,2,\ldots,K\}$  decodes packet ${\bf x}_{i}$ for $ i\in\mathcal{T}_k$ using both the received signal ${\bf y}$ and the coded side-information vector ${\bf u}_k$. The decodablity condition at user $k$ is 
\begin{align}
{\bf D}_k^T\left[%
\begin{array}{c}
{\bf y} \\\hline
 {\bf u}_k\\
\end{array}%
\right]= {\bf R}_k{\bf x},\label{eq:dec_cond_k}
\end{align}
where ${\bf R}_k \in \mathbb{F}_2^{|\mathcal{T}_k|F\times NF}$ denotes the index matrix of the requested packets by user $k$ and ${\bf R}_k\neq {\bf S}_k$. Hence, the index matrix ${\bf R}_k$ is a block matrix whose $(i,t_i)$ sub-block is an identity matrix ${\bf I}_{F}$ if $t_i\in \mathcal{T}_k$; otherwise the remaining blocks are zero matrices. %For example, if user $k$ requests packets ${\bf x}_1$ and ${\bf x}_3$, i.e., $\mathcal{T}_k=\{t_1,t_2\}=\{1,3\}$, then
%\begin{align}
%{\bf R}_k=\left[%
%\begin{array}{ccccc}
%{\bf I}_F  &{\bf 0}_{F}  & {\bf 0}_{F}&\cdots &{\bf 0}_{F} \\
%{\bf 0}_F &{\bf 0}_{F}  & {\bf I}_{F}&\cdots &{\bf 0}_{F}
%\end{array}%
%\right].
%\end{align}
With the decodability condition in (\ref{eq:dec_cond_k}), we define a valid linear index code and its optimal code length.

\begin{definition} (Valid linear index code) The index coding matrix ${\bf C}_{\rm IC} \in \mathbb{F}_2^{LF\times NF}$ is valid over $\mathbb{F}_2$ with the length $LF$ if every user is able to decode its desired sets of packets from the transmitted packets and  side-information available at user $k$. In other words, all users simultaneously satisfy the decodability conditions in (\ref{eq:dec_cond_k}).
\end{definition}

\begin{definition} (Optimal linear index code length) It is said that the index coding matrix ${\bf C}_{\rm IC} \in \mathbb{F}_2^{LF\times NF}$ has the optimal length $\beta_2^{\star}$ if ${\bf C}_{\rm IC} \in \mathbb{F}_2^{LF\times NF}$ is valid and with the minimum number of rows $\beta_2^{\star}=\min LF$.
\end{definition}

\section{Main Results}
In this section, we characterize the minrank expression of the index code length for the class of index coding problems with coded side-information. The following theorem is the main result of this paper.

\begin{theorem} \label{Theorem1}For the given set of side-information generating matrices $\left\{{\bf S}_1,\ldots,{\bf S}_K\right\}$ and the desired packet index matrices $\left\{{\bf R}_1,\ldots,{\bf R}_K\right\}$, the optimal linear vector index code length $\beta^{\star}_2$ over $\mathbb{F}_2$ is obtained by solving the following optimization problem: \begin{align}
\beta^{\star}_2(\{{\bf S}_k,{\bf R}_k\}_{k=1}^K)=\!\!\!\min_{{\bf A}_1^T,\ldots,{\bf A}_K^T} \!\!\!{\rm rk}\left(\!\!\small\left[%
\begin{array}{c}
 {\bf R}_1+{\bf A}_1^T{\bf S}_1  \\ \vdots \\ {\bf R}_K+{\bf A}_K^T{\bf S}_K
\end{array}%
\right]\right),\label{eq:Th1}
\end{align}
where ${\bf A}_k^T \in \mathbb{F}_2^{|\mathcal{T}_k|F\times M_kF}$.\end{theorem}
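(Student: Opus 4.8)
The plan is to reformulate the decodability condition \eqref{eq:dec_cond_k} as a single matrix equation and then establish the claimed minrank expression by proving two matching inequalities. First I would partition the linear decoder of user $k$ as $\mathbf{D}_k^T=[\mathbf{E}_k^T\mid\mathbf{A}_k^T]$, where $\mathbf{E}_k^T\in\mathbb{F}_2^{|\mathcal{T}_k|F\times LF}$ multiplies the received vector $\mathbf{y}$ and $\mathbf{A}_k^T\in\mathbb{F}_2^{|\mathcal{T}_k|F\times M_kF}$ multiplies the side information $\mathbf{u}_k$. Substituting $\mathbf{y}=\mathbf{C}_{\rm IC}\mathbf{x}$ and $\mathbf{u}_k=\mathbf{S}_k\mathbf{x}$ into \eqref{eq:dec_cond_k} and requiring the identity for every $\mathbf{x}\in\mathbb{F}_2^{NF}$ turns decodability into $\mathbf{E}_k^T\mathbf{C}_{\rm IC}=\mathbf{R}_k+\mathbf{A}_k^T\mathbf{S}_k$ over $\mathbb{F}_2$ (where subtraction equals addition). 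Hence a code of length $LF$ is valid if and only if, for every $k$, there exists $\mathbf{A}_k^T$ such that every row of $\mathbf{R}_k+\mathbf{A}_k^T\mathbf{S}_k$ lies in the row space of $\mathbf{C}_{\rm IC}$.

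For the converse I would take a valid code $\mathbf{C}_{\rm IC}$ of optimal length $\beta_2^\star=LF$ together with its decoders, which supply matrices $\{\mathbf{A}_k^T\}$ satisfying the above identity. Stacking the $K$ identities vertically expresses the matrix $[\mathbf{R}_1+\mathbf{A}_1^T\mathbf{S}_1;\,\ldots;\,\mathbf{R}_K+\mathbf{A}_K^T\mathbf{S}_K]$ as a product $\mathbf{E}^T\mathbf{C}_{\rm IC}$, where $\mathbf{E}^T$ is the vertical concatenation of the $\mathbf{E}_k^T$. By submultiplicativity of rank, the rank of this matrix is at most ${\rm rk}(\mathbf{C}_{\rm IC})\le LF$, so the minimum in \eqref{eq:Th1}, being no larger than its value at this particular $\{\mathbf{A}_k^T\}$, is at most $\beta_2^\star$.

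For achievability I would run the argument in reverse. Let $\{\mathbf{A}_k^T\}$ attain the minimum in \eqref{eq:Th1}, let $\mathbf{M}$ be the corresponding stacked matrix and $r={\rm rk}(\mathbf{M})$, and write a rank factorization $\mathbf{M}=\mathbf{B}\mathbf{C}$ with $\mathbf{C}\in\mathbb{F}_2^{r\times NF}$ whose rows form a basis of the row space of $\mathbf{M}$. Take $\mathbf{C}_{\rm IC}=\mathbf{C}$ (so $LF=r$ in the vector setting, padding with zero rows if a particular normalization is insisted on) and let $\mathbf{E}_k^T$ be the row block of $\mathbf{B}$ indexed by user $k$. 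Then $\mathbf{E}_k^T\mathbf{C}_{\rm IC}=\mathbf{R}_k+\mathbf{A}_k^T\mathbf{S}_k$, so the decoder $\mathbf{D}_k^T=[\mathbf{E}_k^T\mid\mathbf{A}_k^T]$ returns $(\mathbf{R}_k+\mathbf{A}_k^T\mathbf{S}_k)\mathbf{x}+\mathbf{A}_k^T\mathbf{S}_k\mathbf{x}=\mathbf{R}_k\mathbf{x}$ over $\mathbb{F}_2$ for every $\mathbf{x}$; thus $\mathbf{C}_{\rm IC}$ is a valid code and $\beta_2^\star\le r$. Combined with the converse this proves \eqref{eq:Th1}.

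I expect two points to need the most care. The first is the reformulation itself: one must notice that the decoder is free to add back an arbitrary linear image $\mathbf{A}_k^T\mathbf{S}_k$ of the coded side information, and that this additive freedom is exactly the generalization --- from ``entries that may be chosen freely on known packets'' to ``an arbitrary element of the row space of $\mathbf{S}_k$ added to each row'' --- of the notion of a matrix \emph{fitting} the side-information graph in the classical ${\rm minrk}(G)$ characterization. The second is the bookkeeping between the abstract rank $r$ and admissible code lengths $LF$: one should justify that a rank-$r$ factor $\mathbf{C}$ can always serve as $\mathbf{C}_{\rm IC}$, so that $\beta_2^\star$ equals the minimum rank rather than merely being sandwiched by it; working over $\mathbb{F}_2$ causes no difficulty here, since rank factorization is available over any field.
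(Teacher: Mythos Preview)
Your proposal is correct and shares the paper's algebraic setup: split each decoder into a received-signal part and a side-information part, substitute $\mathbf{y}=\mathbf{C}_{\rm IC}\mathbf{x}$ and $\mathbf{u}_k=\mathbf{S}_k\mathbf{x}$, and stack the resulting identities into a single matrix equation. Where you differ is in how the optimal length is tied to the minimum rank. The paper argues in one stroke that the stacked matrix has rank \emph{exactly} $LF$: it asserts that both the concatenated decoder block $\bar{\mathbf{B}}$ and the code $\mathbf{C}_{\rm IC}$ have rank $LF$, then sandwiches ${\rm rk}(\bar{\mathbf{B}}\,\mathbf{C}_{\rm IC})$ between $LF$ and $LF$ using the standard upper bound together with Sylvester's rank inequality. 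You instead prove two matching inequalities: a converse via rank submultiplicativity (${\rm rk}(\mathbf{E}^T\mathbf{C}_{\rm IC})\le{\rm rk}(\mathbf{C}_{\rm IC})\le LF$), and an explicit achievability via rank factorization of the minimizing stacked matrix. Your route is slightly longer but more self-contained, since it does not rely on the paper's full-rank assumptions on $\bar{\mathbf{B}}$ and $\mathbf{C}_{\rm IC}$ (which amount to a without-loss-of-generality normalization that the paper states rather than proves); in return, your achievability step makes the construction of a valid code from a minimizer transparent.
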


\begin{proof} We prove Theorem \ref{Theorem1} using an algebraic approach. Recall the decodability condition of user $k\in\mathcal{K}$ in (\ref{eq:dec_cond_k}). We decompose the decoding matrix ${\bf D}_k^T$ into two sub matrices ${\bf B}_k^T \in \mathbb{F}_2^{|\mathcal{T}_k|F \times LF}$ and
 ${\bf A}_k^T \in \mathbb{F}_2^{|\mathcal{T}_k|F\times M_kF}$ as 
\begin{align}
{\bf D}_{k}^T= \left[%
\begin{array}{cc}
 {\bf B}_k^T & {\bf A}_k^T
\end{array}%
\right],
\end{align}
where sub-matrices ${\bf B}_k^T$ and ${\bf A}^T_k$ are multiplied to the received signal vector ${\bf y}$ and side-information vector ${\bf u}_k$, respectively. With these sub-matrices, the decodability condition in (\ref{eq:dec_cond_k}) at user $k$ is equivalently decomposed as
\begin{align}
{\bf B}_k^T{\bf y}+{\bf A}_k^T{\bf u}_k&={\bf R}_k{\bf x}. 
\label{eq:dec_cond_k_compact}
\end{align}
Using the fact that ${\bf y}={\bf C}_{\rm IC}{\bf x}$ and ${\bf u}_k={\bf S}_k{\bf x}$, the decodability condition in (\ref{eq:dec_cond_k_compact}) is rewritten as
\begin{align}
\left({\bf B}_k^T{\bf C}_{\rm IC}+{\bf A}_k^T{\bf S}_k\right){\bf x}&={\bf R}_k{\bf x}. \label{eq:dec_k_mat}
\end{align}
Since ${\bf x}$ is non-degenerate, the decodability condition in (\ref{eq:dec_k_mat}) simplifies further as
\begin{align}
{\bf B}_k^T{\bf C}_{\rm IC}&+{\bf A}_k^T{\bf S}_k ={\bf R}_k, \nonumber \\
{\bf B}_k^T{\bf C}_{\rm IC}&={\bf R}_k+{\bf A}_k^T{\bf S}_k,\label{eq:dec_k_mat_2}
\end{align}
where the last equality is due to the addition over $\mathbb{F}_2$.
Since every users needs to satisfy the decodability condition in (\ref{eq:dec_k_mat_2}), the decodability condition for all the users is given by
\begin{align}
\small\underbrace{\left[%
\begin{array}{c}
{\bf B}_1^T \\ \vdots \\{\bf B}_{K}^T
\end{array}%
\right]}_{(\sum_{k=1}^K|\mathcal{T}_k|)F\times LF} \!\!\!\!\underbrace{{\bf C}_{\rm IC}}_{LF\times NF}=\underbrace{\left[%
\begin{array}{c}
 {\bf R}_1+{\bf A}_1^T{\bf S}_1  \\ \vdots \\ {\bf R}_K+{\bf A}_K^T{\bf S}_K
\end{array}%
\right]}_{(\sum_{k=1}^K|\mathcal{T}_k|)F\times NF}.\label{eq:dec_cond_global}
\end{align}  
Notice that the rank of each matrix in the left-hand side in (\ref{eq:dec_cond_global}) respectively equals to $LF$. This is because, by definition, ${\bf C}_{\rm IC}$ should have $LF$ linearly independent rows as a transmitter sends out a linearly independent linear combination of packets per time slot. Furthermore, the concatenating matrix of all decoding matrices ${\bf B}^T_k$ also has $LF$ linearly independent columns to employ received vector ${\bf y}\in\mathbb{F}_2^{LF}$ in decoding. We denote the concatenating matrix of all decoding matrices by ${\bf \bar B}=\left[{\bf B}_1,\ldots, {\bf B}_K\right]^T$.
From the rank inequality, the rank of the product of the two matrices is upper bounded by
\begin{align}
 \small{\rm rk}\!\left({\bf \bar B}{\bf C}_{\rm IC}\!\right) \leq \min\left\{ {\rm rk}\!\left({\bf \bar B}\right), {\rm rk}({\bf C}_{\rm IC})\!\right\}
=LF.
\end{align}
Furthermore, applying Sylvester's rank inequality, we obtain the lower bound on the rank as
\begin{align}
 \small{\rm rk}\left({\bf \bar B}{\bf C}_{\rm IC}\!\right) \geq  {\rm rk}\left({\bf \bar B}\right)+{\rm rk}({\bf C}_{\rm IC})-LF
=LF.
\end{align} 
As a result, we conclude that the rank of the matrix in the right-hand side in (\ref{eq:dec_cond_global}) equals to $LF$, namely, \begin{align}
LF= {\rm rk}\left(\small\left[%
\begin{array}{c}
 {\bf R}_1+{\bf A}_1^T{\bf S}_1 \\ \vdots \\ {\bf R}_K+{\bf A}_K^T{\bf S}_K
\end{array}%
\right]\right). \label{eq:rankmat}
\end{align}
Since we are interested in finding the minimum $LF$, the optimal index code length $\beta_2^{\star}(\{{\bf S}_k,{\bf R}_k\}_{k=1}^K)=\min LF$ is obtained by minimizing the rank of the matrix in (\ref{eq:rankmat}) with respective to over all possible indeterminate elements in $\{{\bf A}_k\}_{k=1}^K$. Consequently, the minimum index code length is obtained by solving the optimization problem stated in (\ref{eq:Th1}).
\end{proof}

Theorem 1 shows that the optimal linear index code length is determined by two factors: 1) the set of the packet index matrices $\{{\bf R}_k\}$ and 2) the set of side-information encoding matrices $\{{\bf S}_k\}$. Furthermore, the derived minrank expression in (\ref{eq:Th1}) is useful to design the optimal index coding matrix ${\bf C}_{\rm IC}$ with the rank of $\beta_2^{\star}$. This is because, under the premise that $\{{\bf B}_k\}$ is predefined as ${\bf B}_k^T={\bf I}_{|\mathcal{T}_k|F}$ for $k\in\mathcal{K}$, it is possible to attain the optimal index coding matrix with rank $\beta_2^{\star}$, ${\bf C}_{\rm IC}^{\star}$, by arbitrary selecting  a set of the $\beta_2^{\star}$ linearly independent rows in (\ref{eq:Th1}) with $\{{\bf A}_k^{\star}\}$. Therefore, the index coding matrix can be obtained by carefully completing the indeterminate elements in $\{{\bf A}_k^{\star}\}$ so that they provide the minimum rank of the resultant matrix. This motivates us to design an algorithm that finds the index coding matrix via a matrix completion approach, which will be explained in Section IV.

To shed further light on the significance of Theorem \ref{Theorem1}, it is instructive to consider certain special cases and examples.

A special case is when $F=1$, $N=K$, and $M_k=1$. User $k\in\{1,2,\ldots,K\}$ requests packet $x_k\in\mathbb{F}_2$ with one bit file size, i.e., $\mathcal{T}_k=\{k\}$ and $|\mathcal{T}_k|=1$. Therefore, the packet requested by user $k$ is simply written as a unit vector whose $k$-th element is one, ${\bf R}_k= {\bf e}_k^T$.
Further, we assume that the memory size of user $k$ is one bit, i.e., $M_k=1$ for $\forall k$. Then, the coded side-information generating matrix becomes a vector ${\bf s}_k^T \in\mathbb{F}_2^K$. In this reduced setup, the optimal index code length is stated in the following corollary.

\begin{corollary} \label{Corollary1}
When $N=K$ and $F=M_k=1$, the optimal scalar linear index code length is obtained by solving the following optimization problem
\begin{align}
\beta^{\star}_2(\{{\bf s}_k,{\bf e}_k\}_{k=1}^K)&=\min_{{ a}_1,\ldots,{a}_K} {\rm rk}\left(\small\left[%
\begin{array}{c}
 {\bf e}_1^T+ { a}_1 {\bf s}_1^T  \\ \vdots \\ {\bf e}_K^T+{a}_K{\bf s}_K^T
\end{array}%
\right]\right) \nonumber \\
&=\min_{{\bf \bar A}} {\rm rk}\left( {\bf I}_K + {\bf \bar S}{\bf \bar A}\right), \label{eq:cor1}
\end{align}
where ${\bf \bar S}=\left[{\bf s}_1,\ldots,{\bf s}_K\right]$ and ${\bf \bar A}={\rm diag}[a_1, a_2,\ldots,a_K]^T$.
\end{corollary}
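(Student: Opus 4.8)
The plan is to derive Corollary~\ref{Corollary1} directly from Theorem~\ref{Theorem1} by specializing the optimization problem~(\ref{eq:Th1}) to $N=K$ and $F=M_k=1$, and then rewriting the resulting stacked matrix in the compact form ${\bf I}_K+{\bf \bar S}{\bf \bar A}$.

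First I would substitute the special-case quantities into~(\ref{eq:Th1}). Since $F=1$ and $|\mathcal{T}_k|=1$, the request matrix ${\bf R}_k$ collapses to the row vector ${\bf e}_k^T\in\mathbb{F}_2^{1\times K}$; since $F=M_k=1$, the side-information matrix ${\bf S}_k$ collapses to the row vector ${\bf s}_k^T\in\mathbb{F}_2^{1\times K}$; and the decoding sub-matrix ${\bf A}_k^T\in\mathbb{F}_2^{|\mathcal{T}_k|F\times M_kF}$ becomes a single scalar $a_k\in\mathbb{F}_2$. With these substitutions the $k$-th block row ${\bf R}_k+{\bf A}_k^T{\bf S}_k$ of the matrix in~(\ref{eq:Th1}) becomes ${\bf e}_k^T+a_k{\bf s}_k^T$, which gives the first equality of~(\ref{eq:cor1}).

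Next I would recognize the stacked matrix. Stacking ${\bf e}_1^T,\dots,{\bf e}_K^T$ yields ${\bf I}_K$, while stacking $a_1{\bf s}_1^T,\dots,a_K{\bf s}_K^T$ yields the matrix whose $k$-th row is $a_k{\bf s}_k^T$; writing ${\bf \bar A}={\rm diag}[a_1,\dots,a_K]$ and ${\bf \bar S}=[{\bf s}_1,\dots,{\bf s}_K]$ (so that ${\bf \bar S}^T$ has rows ${\bf s}_k^T$), this is precisely ${\bf \bar A}{\bf \bar S}^T$. Hence the argument of the rank in the first line of~(\ref{eq:cor1}) equals ${\bf I}_K+{\bf \bar A}{\bf \bar S}^T$. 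Using invariance of rank under transposition together with the symmetry of the diagonal matrix ${\bf \bar A}$ gives ${\rm rk}({\bf I}_K+{\bf \bar A}{\bf \bar S}^T)={\rm rk}({\bf I}_K+{\bf \bar S}{\bf \bar A})$. Finally, minimizing over all scalar tuples $(a_1,\dots,a_K)\in\mathbb{F}_2^K$ is the same as minimizing over all diagonal ${\bf \bar A}$, which establishes the second equality of~(\ref{eq:cor1}).

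There is no genuine mathematical obstacle in this argument; the only point requiring care is the bookkeeping of transposes and the row/column convention for ${\bf \bar S}$. One must verify that it is the rows of the stacked matrix (not its columns) that equal ${\bf e}_k^T+a_k{\bf s}_k^T$, so that ${\bf I}_K+{\bf \bar S}{\bf \bar A}$ is the transpose of that stack rather than the stack itself; once the diagonal structure of ${\bf \bar A}$ is exploited to absorb this transpose, the equivalence of the two optimization problems is immediate.
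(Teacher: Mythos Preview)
Your proof is correct and follows the same approach as the paper: both derive the corollary by direct specialization of Theorem~\ref{Theorem1} to the case $N=K$, $F=M_k=1$, ${\bf R}_k={\bf e}_k^T$, ${\bf S}_k={\bf s}_k^T$, ${\bf A}_k^T=a_k$. The paper's own proof is terser and simply asserts that the result follows from Theorem~\ref{Theorem1}, whereas you spell out the bookkeeping (including the transpose step needed to pass from ${\bf I}_K+{\bf \bar A}{\bf \bar S}^T$ to ${\bf I}_K+{\bf \bar S}{\bf \bar A}$) that the paper leaves implicit.
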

\proof Without loss of generality, we assume that user $k$ desires to decode file $x_k$, i.e., ${\bf R}_k={\bf e}_k^T$ with side-information $u_k={\bf s}_k^T{\bf x}$. Then, from Theorem 1, the optimal index code length is obtained by solving the problem stated in (\ref{eq:cor1}).
\endproof

\textbf {Example 1 (Optimal Side-Information Encoding Structure): } For the given $a_k=1$, if we choose ${\bf \bar S}={\bf J}_K+ {\bf I}_K$ where ${\bf J}_K\in\mathbb{F}_2^{K\times K}$ is a all-ones matrix, then the rank of the matrix ${\bf I}_K + {\bf \bar S}{\bf \bar A}$ becomes one as ${\bf I}_K + {\bf \bar S}{\bf \bar A}={\bf I}_K + {\bf J}_K+ {\bf I}_K=  {\bf J}_K$ and $\rm{rank}({\bf J}_k)=1$. As a result, we conclude that the optimal index code with the length one is achievable \textit{if and only if} the product of the side-information encoding matrix and the free variables in ${\bf A}$ (decoding matrix) has a particular structure of ${\bf \bar S}{\bf \bar A}={\bf J}_K+ {\bf I}_K$. This confirms the intuition that if each user knows XORed information of all packets excepting for its desired one as side-information, it is possible to satisfy all users by sending XORed information of all packets within one channel use.

\textbf {Example 2 (Connection to Index Coding with Uncoded Side-Information): } Let us consider the following index coding problem where $K=N=5$ and $F=1$. User $k\in\{1,\ldots,5\}$ desires to decode $x_k\in \mathbb{F}_2$ with the set of uncoded side-information as follows:
\begin{itemize}
\item User 1 has $x_2$ and $x_5$, i.e., ${\bf S}_1=[{\bf e}_2~ {\bf e}_5]^T \in \mathbb{F}_2^{2\times 5}$,
\item User 2 has $x_1$ and $x_5$, i.e., ${\bf S}_2=[{\bf e}_1~ {\bf e}_5]^T \in \mathbb{F}_2^{2\times 5}$,
\item User 3 has $x_2$ and $x_4$, i.e., ${\bf S}_3=[{\bf e}_2~{\bf e}_4]^T \in \mathbb{F}_2^{2\times 5}$,
\item User 4 has $x_2$ and $x_3$, i.e., ${\bf S}_4=[{\bf e}_2~{\bf e}_3]^T \in \mathbb{F}_2^{2\times 5}$,
\item User 5 has $x_1$, $x_3$, and $x_4$, i.e., ${\bf S}_5=[{\bf e}_1~ {\bf e}_3~ {\bf e}_4]^T \in \mathbb{F}_2^{3 \times 5}$.
\end{itemize}
Note that since the side-information is uncoded, each row in side-information matrix ${\bf S}_k$ contains a non-zero element. Denoting ${\bf A}_k^{T}=[a_{k}^1, a_{k}^2]\in \mathbb{F}_2^{2\times 1}$ for $k\in\{1,\ldots,4\}$ and ${\bf A}_5^T=[a_5^1, a_5^2, a_5^3]\in\mathbb{F}_2^{3\times 1}$, from Theorem 1, we are able to find the optimal index coding matrix ${\bf C}_{\rm IC}$ by solving the following optimization problem:
\begin{align}
\beta_2^{\rm uncoded}\!=\!\!\!\!\min_{\{{a}^1_k,a^2_k\},a^3_5} \!\!{\rm rk}\left(\small{\left[%
\begin{array}{ccccc}
 1 & a^1_{1} &0 &0 & a^2_{1} \\
 a^1_{2} &1&0&0 &a^2_{2}\\
 0& a^1_{3} &1& a^2_{3} &0\\
 0& a^1_{4}&a^2_{4} &1 &0\\
 a^1_5 &0 & a^2_5 & a^3_5 &1 \\
\end{array}%
\right]}\right) \label{eq:ex2_uncoded}. 
\end{align}
Now, we consider the same index coding problem but   side-information is coded as follows:
 \begin{itemize}
\item User 1 has $x_2+x_5$, i.e., ${\bf s}_1=[{\bf e}_2+{\bf e}_5]^T \in \mathbb{F}_2^{1\times 5}$,
\item User 2 has $x_1+x_5$, i.e., ${\bf s}_2=[{\bf e}_1+{\bf e}_5]^T \in \mathbb{F}_2^{1\times 5}$,
\item User 3 has $x_2+x_4$, i.e., ${\bf s}_3=[{\bf e}_2+{\bf e}_4]^T \in \mathbb{F}_2^{1\times 5}$,
\item User 4 has $x_2+x_3$, i.e., ${\bf s}_4=[{\bf e}_2+{\bf e}_3]^T \in \mathbb{F}_2^{1\times 5}$,
\item User 5 has $x_1+x_3+x_4$, i.e., ${\bf s}_5=[{\bf e}_1 + {\bf e}_3+{\bf e}_4]^T \in \mathbb{F}_2^{1\times 5}$.
\end{itemize}
Since each user has coded side-information, unlike the uncoded case, the decoding matrix for user $k$ becomes ${\bf A}_k^T=[a_k]\in\mathbb{F}_2^{1}$. As a result, using Theorem 1, the optimal index coding matrix for the coded case is obtained by solving the following optimization problem:
\begin{align}
\beta_2^{\rm coded}=\min_{{a}_1,\ldots,a_5} {\rm rk}\left(\small\left[%
\begin{array}{ccccc}
 1 & a_{1} &0 &0 & a_{1} \\
 a_{2} &1&0&0 &a_{2}\\
 0& a_{3} &1& a_{3} &0\\
 0& a_{4}&a_{4} &1 &0\\
 a_5 &0 & a_5 & a_5 &1 \\
\end{array}%
\right]\right) \label{eq:ex2_coded}. 
\end{align}
From (\ref{eq:ex2_uncoded}) and (\ref{eq:ex2_coded}), we observe that the two minrank optimization problems are equivalent, provided that, in (\ref{eq:ex2_uncoded}), the additional constraints are imposed on indeterminate elements per each rows such that $a_k^j=a_k^i$ for $j\neq i$. Intuitively, for the uncoded case, we are able to exploit different side-information separately in decoding, which provides more degrees of freedom to choose the indeterminate elements. For the coded case, however, the side-information is only used in the coded fashion in decoding, which imposes the constraints on the indeterminate elements.

\textbf {Example 3 (Coding over Side-Information in a Caching 
Problem):} Suppose the case of $N=K=2$ in which a server intends to deliver two files with file size $F=2$, i.e., ${\bf x}_1=[a_1, a_2]^T$ and ${\bf x}_2=[b_1,b_2]^T$ to the user 1 and 2, each with one bit cache memory $FM_k=1$. The coded caching method proposed in \cite{MN} is to store $a_1+b_1$ and $a_2+b_2$ to user 1 and user 2 in the caching phase. During the delivery phase, the transmitter sends $b_1$ and $a_2$ over two channel uses to satisfy the users' request. 

 In our framework, this caching method can be realized by choosing the side-information matrices such that
\begin{align}
{\bf s}_1=[1~ 0~ 1~ 0]~~ \rm{and}~~{\bf s}_2=[0~ 1~ 0~ 1].
\end{align}
Since user 1 and user 2 desire to decode file ${\bf x}_1$ and ${\bf x}_2$, the requested packet index matrices are 
\begin{align}
{\bf I}_1=\left[%
\begin{array}{cccc}
 1&0&0&0 \\
 0&1&0&0 \\
\end{array}%
\right] ~~{\rm and}~~
{\bf I}_2=\left[%
\begin{array}{cccc}
 0 &0 &1&0 \\
 0 &0 &0&1 \\
\end{array}%
\right].
\end{align}
By selecting the indeterminate elements as ${\bf A}_1^T=[1, 0]^T$ and ${\bf A}_2^T=[0, 1]^T$, we obtain the minimum index code length with two because \begin{align}
\beta^{\star}_2(\{{\bf s}_k,{\bf I}_k\}_{k=1}^2)&={\rm rk}\left(\!\!\left[%
\begin{array}{c}
 {\bf I}_1+{\bf A}_1^T{\bf s}_1 \\ {\bf I}_2+{\bf A}_2^T{\bf s}_2
\end{array}%
\right]\right) \nonumber \\ 
&={\rm rk}\left(\left[%
\begin{array}{cccc}
 0&0&1&0 \\
 0&1&0&0 \\
 0&0&1&0 \\
 0&1&0 &0 \\
\end{array}%
\right]\right) =2. \label{eq:example_caching_twouser}
\end{align}
As shown in the rows of the resultant matrix in (\ref{eq:example_caching_twouser}), there are two possible transmission schemes that satisfy the users' requests within two channel uses. The methods are to choose the index coding matrix ${\bf C}_{\rm IC}$ as either the first two rows in (\ref{eq:example_caching_twouser}) or the third and last rows in (\ref{eq:example_caching_twouser}).

\section{Algorithm via Matrix Completion}
In this section, we propose a index code design algorithm that leverages the minrank expression derived in Theorem 1. 
From Theorem 1, we observed that the optimal linear index coding matrix can be obtained by solving a matrix completion problem over a finite field. It is notable that the matrix competition problem to minimize the rank of the resultant matrix in (\ref{eq:Th1}) is different from the conventional matrix completion problems in \cite{Draper,Candes}. This discrepancy comes from the fact that, in our problem, an indeterminate element in $\{{\bf A}_k^{\star}\}$ affects the multiple entries in the resultant matrix in (\ref{eq:Th1}), which is not the case for the conventional matrix completion problems.

Using Theorem 1, we propose a greedy random search algorithm that finds a linear index coding matrix with the rank of $\beta$. The proposed algorithm initially  computes the rank of the matrix over $\mathbb{F}_2$, assuming that ${\bf A}_k^0$ is a matrix of zeros. Then, the algorithm runs over  until no rank change is detected over $U$ iterations in sequence. For the $p$-th iteration, we first update the elements of ${\bf A}_k^p$ randomly according to Bernoulli distribution with parameter $T$. Subsequently, we compute the rank of the matrix with the updated matrices $\{{\bf A}_k^p\}$ and store them if the rank decreases compared to the previous one. The algorithm is summarized in Table \ref{tab:greedy_alg}. In the proposed algorithm, the number of iterations $U$ plays a role in balancing between the performance and complexity. On the one hand, when the number of iterations $U$ is chosen sufficiently large, the proposed algorithm is able to yield the optimal minimum rank with high probability. On the other hand, when $U$ is not large enough, the probability that the algorithm reaches the minimum rank becomes low by reducing the computational complexity. Furthermore, the parameter $T$ controls the likelihood that side-information matrix $\{{\bf S}_k\}$ is used in decoding. This is because indeterminate elements in $\{{\bf A}_k\}$ is multiplied with the elements in side-information matrix ${\bf S}_k$. 

\begin{table}
\caption{The proposed greedy randomized index code design algorithm} 
\centerline{\scalebox{0.950}{
     \begin{tabular}{c|c}
	\hline
	 & Algorithm  \\
	\hline
	Input       & $\{{\bf R}_{k}\}$, $\{{\bf S}_{k}\}$ for $k\in\mathcal{K}$, $U$, and $T$\\
	Output       & ${\bf C}_{{\rm IC}}$\\
		\hline
	Initialization       & Set ${\bf A}^{\star}_k={\bf 0}$ for $k\in\mathcal{K}$\\
	& Set $p:=0$, $u:=0$, and $T:=t$;\\
	        & Compute $\beta=:{\rm rk}\left(\!\!\left[%
\begin{array}{c}
 {\bf R}_1+{{\bf A}_1^{\star}}^T{\bf S}_1 \\ \vdots \\ {\bf R}_K+{{\bf A}_K^{\star}}^T{\bf S}_K
\end{array}%
\right]\right)$  \\ 	\hline
	While &  $u< U$  \\
	&  \!\!\!\!\!\!\!\!\!\!\!\!\!\!\!\!\!\!\!\!\!\!\!\!\!\!  \!\!\!\!\!\!\!\!\!\!\!\!\!\!\!\!\!\!\!\!\!\!\!\!\!\! \!\!\!\!\!\!\!\!\!\!\!\!\!\!\!\!\!\!\!\!\!\!\!\!\!\! \!\!\!\!\!\!\!\!\!\!\!\!\!\!\!\!\!\!\!\!\!\!\!\!\!\!\!\!\!1. $p=:p+1$ \\
	& \!\!\!\!\!\!\!\!\!\!\!\!\!\!\!\!\!\!\!\!\!\!\!\!\!\! 2. Update  ${\bf A}^p_{k}(i,j)={\rm rand} > T$ for $\forall k,i,j$ \\
	%& for $k\in\mathcal{K}$, $i\!\in\! [M_kF]$, and $j\!\in \![|\mathcal{T}_k|F]$. \\
	&\!\!\!\!\!\!\!\!\!\!\!\!\!\! 3. Compute $r_p=:{\rm rk}\left(\!\!\left[%
\begin{array}{c}
 {\bf R}_1+{{\bf A}_1^{p}}^T{\bf S}_1 \\ \vdots \\ {\bf R}_K+{{\bf A}_K^p}^T{\bf S}_K
\end{array}%
\right]\right)$ \\
	 & \!\!\!\!\!\!\! 4. Update $\beta=:r_p$, ${\bf A}^{\star}_k:={\bf A}^p_k$, and $u=0$, if $r_p< \beta$ \\
	 &     \!\!\!\!\!\! \!\!\!\!\!\!\!\!\!\!\!\!\!\!\!\!\!\!\!\!\!\!\!\!\!\!\!\!\!\!\!\!\!\!\!\!\!\!\!\!\!\!\!\!\!\!\!\!Update $u=:u+1$, if $r_p\geq \beta$ \\
	end &\\
	\hline 
	 & ${\bf C}_{\rm{IC}}$=: Select $\beta$ independent rows in $\!\left[\!\!\!\!%
\begin{array}{c}
 {\bf R}_1+{{\bf A}_1^{\star}}^T{\bf S}_1 \\ \vdots \\ {\bf R}_K\!+\!{{\bf A}_K^{\!\star}}^T{\bf S}_K
\end{array}%
\!\!\!\!\right]$  \\	\hline
    \end{tabular}}}
    \label{tab:greedy_alg}
\end{table}

To verify the performance of the proposed algorithm. we consider the coded side-information case in Example 2.
\begin{align}
\beta_2^{\rm {coded}}=\!\min_{{a}_1,\ldots,a_5}\!\! {\rm rk}\left(\small\left[%
\begin{array}{ccccc}
 1 & a_{1} &0 &0 & a_{1} \\
 a_{2} &1&0&0 &a_{2}\\
 0& a_{3} &1& a_{3} &0\\
 0& a_{4}&a_{4} &1 &0\\
 a_5 &0 & a_5 & a_5 &1 \\
\end{array}%
\right]\right) \label{eq:ex3}. 
\end{align}
Note that the minimum rank of the matrix in (\ref{eq:ex3}) is two, which is obtained when every side-information is used in decoding, i.e., $a_k=1$ for $k\in\{1,\ldots,5\}$. Applying the proposed random greedy search method, we are able to compute the average of the rank $\mathbb{E}[\beta]$ as a function of the number of iterations $U$. As illustrated in Fig. \ref{fig:2}, when $T=0.1$ (the probability that each user does not use coded side-information in decoding), the proposed  algorithm achieves the minrank of 2 almost surely by searching $U=3$ points randomly among $2^5$ full search space. Furthermore, once the solutions $a_k=1$ are determined, we obtain the index coding matrix by arbitrary selecting two independent rows (2 and 3) in the matrix (\ref{eq:ex3}). As a result, it is possible for all users to decode $x_k$ if the transmitter sends $x_1+x_2+x_5$ and $x_2+x_3+x_4$ with two channel uses.

\begin{figure}
\centering \label{Fig2}
\includegraphics[width=3.3in]{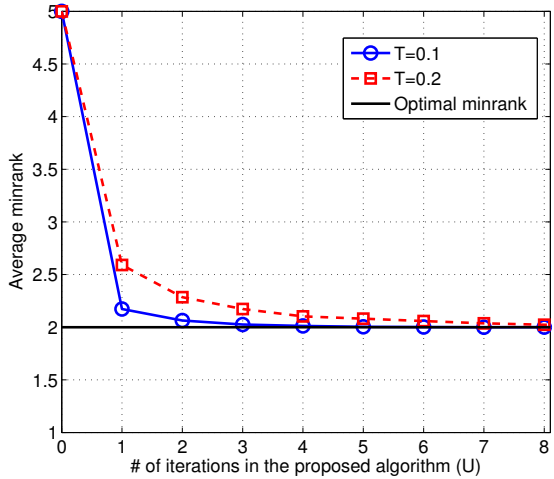}
\caption{Average minrank of the proposed algorithm accoding to the different parameters of $U$ and $T$.} \label{fig:2} 
\end{figure}

\section{Conclusion}
In this letter, we studied a class of index coding problems with coded side-information. The optimal binary linear index code length is characterized in terms of the minrank expression of a matrix using an algebraic approach. By leveraging the derived minrank expression, we proposed a simple algorithm that solves a matrix completion problem to design index codes. The analytical minrank expression derived in this letter can be applied to design caching algorithms in many content distribution systems.

\end{document}